\newtheorem{theorem}{Theorem}[section]
\newtheorem{conjecture}[theorem]{Conjecture}
\newtheorem{definition}[theorem]{Definition}
\newtheorem{lemma}[theorem]{Lemma}
\newtheorem{proposition}[theorem]{Proposition}
\newtheorem{remark}[theorem]{Remark}
\newtheorem{Main Result:}{Main Result:}
\newenvironment{proof}[1][Proof]{\textbf{#1.} }{\ \rule{0.5em}{0.5em}} %
\begin{document}

\title{{Note\\ On the conjecture about the nonexistence of rotation symmetric bent functions}
\thanks{This work was supported by NSF of China with contract No.
60803154}
\author{Zhang Xiyong\footnote{\textbf{Corresponding E-mail Address:} xiyong.zhang@hotmail.com}, \ \
Gao Guangpu \\
{\small Zhengzhou Information Science and Technology Institute, PO Box 1001-745, Zhengzhou 450002, PRC} \\
 }
\date{}}
\maketitle

\vspace{-0.4cm}

\begin{abstract}
In this paper, we describe a different approach to the proof of
the nonexistence of homogeneous rotation symmetric bent functions.
As a result, we obtain some new results which support the
conjecture made in this journal, i.e., there are no homogeneous
rotation symmetric bent functions of degree $>2$. Also we
characterize homogeneous degree $2$ rotation symmetric bent
functions by using GCD of polynomials.
\end{abstract}

\par \textbf{Keywords:}
{\textit{Boolean functions, Bent, Rotation-symmetric, Fourier
Transform}

\section{Motivation}
Since the introduction in the seventies by Rothaus \cite{rothus},
bent functions have been intensively studied in the past three
decades, and widely used in cryptography and error-correction
coding due to their nice cryptographic and combinatoric
properties. For example, the highest possible nonlinearity of bent
functions can be used to resist the differential attack and the
linear attack in symmetric cipher.

Recently, homogeneous rotation symmetric (Abbr. RotS) Boolean
functions have attracted attentions (see
\cite{pieprzyk,cusik,kavut}) because of their highly desirable
property, i.e.  they can be evaluated efficiently by re-using
evaluations from previous iterations. Consequently, when efficient
evaluation of the function (for example, design of some
cryptographic algorithm, such as MD4 and MD5) is essential, these
functions can serve as a good option.

It is natural to ask what kind of homogeneous RotS bent functions
exist. In fact, homogeneous bent functions are of interest in
literature \cite{charnes,stanica2002,stanica,xia,meng,dalai}.
St\u{a}nic\u{a} and Maitra \cite{stanica2002,stanica} studied RotS
bent functions up to 10-variables. They enumerated all RotS bent
functions in 8-variables. $4\cdot 3776$ such functions of degree 2
were found. However, they couldn't find any homogeneous RotS bent
functions of degree 3,4 and 5 in 10 variables. Thus they made the
following conjecture.

\begin{conjecture}
There are no homogeneous rotation symmetric bent functions of
degree $>2$.
\end{conjecture}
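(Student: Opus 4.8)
The plan is to work entirely through the Walsh--Fourier transform and to encode a homogeneous rotation symmetric function of degree $d$ by the finite data of its orbit coefficients, then to translate bentness into an algebraic condition that can be shown unsatisfiable when $d>2$. Write the cyclic group $C_n$ acting on coordinates by $x_i\mapsto x_{i+1}$ (indices mod $n$). Every homogeneous RotS function of degree $d$ is $f=\sum_{O} c_O\, m_O$, where $O$ runs over the $C_n$-orbits of $d$-subsets of $\mathbb{Z}/n\mathbb{Z}$, the orbit (symmetrized) monomial is $m_O=\sum_{\alpha\in O}x^{\alpha}$, and $c_O\in\mathbb{F}_2$. Since $f$ is $C_n$-invariant its Walsh spectrum is constant on $C_n$-orbits of $\mathbb{F}_2^n$, so the bent condition $\widehat f(u)=\pm 2^{n/2}$ for all $u$ becomes a system indexed by orbit representatives in the unknowns $c_O$. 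The first task is to write this system explicitly and, crucially, to exploit that multiplication by the generating orbit sums is the convolution of the group algebra $\mathbb{F}_2[x]/(x^n+1)$, so that the cyclic symmetry diagonalizes the linear part of the problem under the discrete Fourier transform over a splitting field of $x^n+1$.

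Second, I would establish the degree-$2$ template, both because it is the one regime where such bent functions exist and because it fixes the method. A quadratic RotS function corresponds to a symmetric circulant bilinear form $B$ over $\mathbb{F}_2$, and bentness is exactly $\mathrm{rank}(B)=n$. Because $B$ is circulant, its rank is governed by the polynomial $g(x)=\sum_j c_j\,(x^{j}+x^{n-j})\in\mathbb{F}_2[x]$, and full rank is equivalent to a coprimality/GCD condition of $g(x)$ with $x^{n}+1$. This is the characterization promised in the abstract; carrying it out shows that the conjecture is genuinely a statement about $d>2$, and that any proof for higher degree must replace the linear-algebraic rank argument with a nonlinear analogue.

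Third, for $d\ge 3$ the plan is to descend in degree via derivatives and to turn balancedness of all derivatives into a single global divisibility obstruction. Recall $f$ is bent iff every nonzero derivative $D_a f(x)=f(x)+f(x+a)$ is balanced. If $f$ is homogeneous of degree $d$, then $D_a f$ has degree $d-1$, and the derivative operators intertwine with the $C_n$-action, so it suffices to control $D_a f$ for orbit representatives $a$. I would combine two constraints: (i) McEliece-type $2$-adic weight divisibility for degree-$d$ functions, which pins the valuation of the weights of $f$ and of its restrictions; and (ii) the combinatorial design on the $d$-subsets forced by the orbit structure, which makes the number of monomials through each $(d-1)$-subset of coordinates constant. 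The aim is to prove that the valuation $n/2-1$ of the bent weight $2^{n-1}\pm 2^{n/2-1}$ is incompatible, for $d\ge 3$, with the valuation imposed by homogeneity once these rotation-symmetric design constraints are enforced, thereby ruling out all such $f$ at once rather than residue class by residue class.

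The step I expect to be the main obstacle is exactly this last fusion: for $d=2$ bentness is a clean full-rank condition, but for $d\ge 3$ the map from the coefficients $c_O$ to the Walsh spectrum is genuinely nonlinear, and no single polynomial GCD with $x^n+1$ encodes bentness. The difficulty is to build the correct replacement invariant---most plausibly a coupled system of higher-order circulant forms arising from the degree-$(d-1)$ derivatives $D_a f$ across orbits of $a$---and to prove it unsatisfiable for every even $n$. If a uniform argument resists, the same framework still delivers the conjecture on infinite families of $n$ (those for which $x^n+1$ has few $\mathbb{F}_2$-irreducible factors, where the Fourier diagonalization is sharpest) and for single-orbit $f$, which together give strong support; but the target here is the full statement, and the Fourier-plus-divisibility packaging is designed to be pushed to it.
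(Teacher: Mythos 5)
There is a genuine gap, and it is structural: the statement you are asked to prove is a \emph{conjecture}, which the paper itself does not prove (it only establishes partial nonexistence results for specific SANF families plus a degree-$2$ characterization), and your proposal does not prove it either. Your first two steps are sound but do not touch the claim: the orbit/circulant setup is standard, and your degree-$2$ paragraph essentially reproduces the paper's own result (bentness of a quadratic RotS function $\Leftrightarrow$ the circulant form has full rank $\Leftrightarrow$ $GCD(\sum_i(x^{e_i-1}+x^{n+1-e_i}),\,x^n+1)=1$). The entire burden of the conjecture sits in your third step, and there you give a plan rather than an argument --- you yourself flag the ``fusion'' as the expected obstacle and offer no mechanism to resolve it. A proof attempt that defers its only essential step is not a proof, so the verdict cannot be that the statement is established.

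Moreover, the third step fails concretely as designed. McEliece/Ax-type divisibility says the weight of a degree-$d$ function is divisible by $2^{\lceil n/d\rceil-1}$; for $d\ge 3$ this is \emph{weaker} than, not in tension with, the bent weight valuation $n/2-1$ of $2^{n-1}\pm 2^{n/2-1}$, so no contradiction emerges from valuations plus homogeneity alone. Indeed it cannot: homogeneous (non-rotation-symmetric) bent functions of degree $3$ in $6$ variables are known to exist, so any obstruction using only homogeneity-induced combinatorics would be false. That forces rotation symmetry to enter essentially, but your one symmetry-specific ingredient --- that the number of monomials through each $(d-1)$-subset is constant --- is incorrect: cyclic invariance makes these counts constant only along each $C_n$-orbit of $(d-1)$-subsets, and distinct orbits generally carry different counts. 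By contrast, where the paper actually extracts leverage from rotation symmetry is the Hou--Carlet numerical-normal-form criterion $v_2(h_f(\mathbf{u}))>|\mathbf{u}|-n/2$ for $\mathbf{u}\neq\mathbf{1}$, evaluated at vectors $\mathbf{u}_0=\mathbf{u}_1\oplus\rho^{D_1}(\mathbf{u}_1)\oplus\cdots\oplus\rho^{(k-1)D_1}(\mathbf{u}_1)$ built from rotations of a single SANF term, where the orbit structure pins $v_2(h_f(\mathbf{u}_0))=k$ exactly and yields $k(d-1)<n/2$ --- and even this sharper, genuinely symmetry-using valuation argument only rules out particular SANF families, which is precisely why the conjecture remains open in the paper. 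If you want to salvage your framework, the honest scope is your own fallback (single-orbit $f$ and special $n$), not the full statement.
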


Let us summarize known results related to the above conjecture.
Observing that bent functions are in fact Hadamard difference
sets, Xia et al.\cite{xia} showed that there are no homogeneous
bent functions of degree $n$ in $2n$ variables for every $n>3$. By
using the relationship between the Fourier spectra of a Boolean
function at partial points and the Fourier spectra of its
sub-functions, Meng et al.\cite{meng} got a low bound of degree
for homogeneous bent functions. From the view point of
nonlinearity, St\u{a}nic\u{a} \cite{stanica2008} obtained the following
nonexistence results (see Section 2 for the notation SANF of a
Boolean function):

\begin{theorem}\label{dalai-stanica}
The following hold for a homogeneous RotS $f$ of degree $d\geq 3$
in $n$ variables:

(i) If the SANF of $f$ is $x_1\cdots x_d$, then $f$ is not bent.

(ii) If the SANF of $f$ is $x_1\cdots x_d+x_1\cdots x_{d-1}x_{d+1}$,
then $f$ is not bent, assuming: $\frac{n-2}{4}>\lfloor \frac{n}{d}
\rfloor$, if $n\neq 1 (mod\ d)$; $\frac{n}{4}>\lfloor \frac{n}{d}
\rfloor$, if $n\equiv 1 (mod\ d)$.

(iii) If the SANF of $f$ is
$\mathbf{x}^{\mathbf{u}_1}+\cdots+\mathbf{x}^{\mathbf{u}_m}$, then
$f$ is not bent when $d_f<\frac{n/2-1}{\lfloor n/d \rfloor}$,
where $d_f=Max_{i,j}\{j_2-j_1 | u_{ij_1}=u_{ij_2}=1,\and\
u_{ij}=0\ if\ j_1<j<j_2 \}$.
\end{theorem}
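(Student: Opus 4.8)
The plan is to attack all three parts through a single mechanism: to show $f$ is not bent it suffices to exhibit one Walsh--Hadamard coefficient of absolute value exceeding $2^{n/2}$, and I will obtain such a coefficient by forcing $f$ to collapse to the constant $0$ on a large coordinate subcube. The organizing tool is the following elementary estimate. Let $\hat f(\omega)=\sum_x(-1)^{f(x)+\omega\cdot x}$, and let $V=\{x:x_j=0\ \text{for all}\ j\in T\}$ be the subcube of dimension $k=n-|T|$ obtained by fixing the coordinates in $T$ to $0$. If $f\equiv 0$ on $V$, then inserting Fourier inversion $(-1)^{f(x)}=2^{-n}\sum_\omega\hat f(\omega)(-1)^{\omega\cdot x}$ into $\sum_{x\in V}(-1)^{f(x)}=2^{k}$ and using $\sum_{x\in V}(-1)^{\omega\cdot x}=2^{k}$ when $\mathrm{supp}(\omega)\subseteq T$ (and $0$ otherwise) gives
$$\sum_{\mathrm{supp}(\omega)\subseteq T}\hat f(\omega)=2^{n}.$$
Averaging over the $2^{|T|}=2^{\,n-k}$ admissible $\omega$ forces $\max_\omega|\hat f(\omega)|\ge 2^{k}$. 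Hence if we can make $f$ vanish on a subcube of dimension $k>n/2$, then $\max_\omega|\hat f(\omega)|\ge 2^{k}>2^{n/2}$, so $f$ is not bent.

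Next I would exploit the homogeneous RotS structure to turn ``$f\equiv 0$ on $V$'' into a covering problem. Writing $f=\sum_i\sum_{s}\mathbf{x}^{\sigma^s\mathbf{u}_i}$, where $\sigma$ is the cyclic shift and the $\mathbf{u}_i$ are the degree-$d$ representative exponent vectors, fixing the coordinates of $T$ to $0$ annihilates precisely those monomials whose support meets $T$. Thus $f\equiv 0$ on $V$ if and only if $T$ meets the support of every cyclic translate $\sigma^s\mathbf{u}_i$. So the task reduces to finding a \emph{small} set $T$ that blocks all cyclic translates of all representative supports, and then checking $n-|T|>n/2$. Because each $\mathbf{u}_i$ has exactly $d$ ones whose consecutive gaps are at most $d_f$, its support fits into short cyclic bands, and a suitably periodic blocking set of size at most $d_f\lfloor n/d\rfloor$ meets every translate. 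This yields $k=n-|T|\ge n-d_f\lfloor n/d\rfloor$, and the hypothesis $d_f<\frac{n/2-1}{\lfloor n/d\rfloor}$ is exactly what guarantees $|T|<n/2-1$, hence $k>n/2$; this proves (iii).

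Parts (i) and (ii) are then the two sharp low-$d_f$ instances of the same construction, where I would replace the generic bound by the exact blocking count. In (i) the support is $d$ consecutive coordinates ($d_f=1$), so $T$ need only prevent $d$ consecutive coordinates from all lying in $\overline T$; this is achieved with $\lceil n/d\rceil$ blockers, giving $k=n-\lceil n/d\rceil>n/2$ for every admissible $n$ with $d\ge 3$, which is why (i) is unconditional. In (ii) the two representative supports give $d_f=2$, and the two displayed inequalities $\frac{n-2}{4}>\lfloor n/d\rfloor$ and $\frac{n}{4}>\lfloor n/d\rfloor$ are precisely the conditions $2\lfloor n/d\rfloor<n/2-1$ and $2\lfloor n/d\rfloor<n/2$ under which the blocking count leaves $k>n/2$; the split according to whether $n\equiv 1\pmod d$ records the single extra (or saved) blocker forced by the cyclic wraparound.

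The main obstacle is exactly this covering combinatorics: pinning down the minimum number of coordinates one must fix so that every cyclic shift of every representative monomial is hit. The delicate points are the overlaps between distinct shifted supports (which can let one blocker serve several translates, improving the count) and the wraparound at the cyclic boundary, which is what produces the $\bmod\ d$ case distinction in (ii) and the exact ``$-1$'' of slack needed to guarantee $k$ is \emph{strictly} above $n/2$. Getting these counts tight, rather than merely up to a constant factor, is where the real work lies; the Walsh/nonlinearity step above is then purely formal.
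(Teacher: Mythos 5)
A preliminary remark: the paper itself contains no proof of Theorem \ref{dalai-stanica}; it is quoted as background from \cite{stanica2008}, where the argument runs through weight and nonlinearity estimates. So your proposal can only be judged on its own terms, and its first half is in fact correct and clean: the identity $\sum_{\mathrm{supp}(\omega)\subseteq T}\hat f(\omega)=2^{n}$ for a function vanishing on the coordinate subcube $V$, the averaging step giving $\max_{\omega}|\hat f(\omega)|\ge 2^{\,n-|T|}$, and the observation that over $\mathbb{F}_2$ the restriction of a RotS $f$ to $V$ is identically zero if and only if $T$ meets the support of every cyclic shift of every $\mathbf{u}_i$ (there is no cancellation among the distinct monomials, so the equivalence is exact). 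This is a genuinely different, and arguably more transparent, route than the nonlinearity computation in \cite{stanica2008}.

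The gap is in the covering counts, which you flag as ``where the real work lies'' and then do not carry out; moreover, the one general bound you do assert is false as stated. You claim a blocking set of size at most $d_f\lfloor n/d\rfloor$ always exists. Take $d_f=1$ with $d\nmid n$, e.g.\ $n=10$, $d=3$, SANF $x_1x_2x_3$: any $T$ with $|T|=\lfloor n/d\rfloor=3$ splits the cycle into three arcs of total length $7$, so some arc contains $3$ consecutive unblocked positions and hence an unhit translate; the true minimum is $\lceil n/d\rceil=4>d_f\lfloor n/d\rfloor$. The conclusion of (iii) may still survive because the hypothesis $d_f\lfloor n/d\rfloor<n/2-1$ carries a unit of slack, but whether that slack always absorbs the $\lceil\cdot\rceil$-versus-$\lfloor\cdot\rfloor$ discrepancy and the wraparound corrections is precisely the quantitative question you leave open. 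For instance, the natural block construction (blocks of $d_f$ consecutive blockers separated by arcs of length at most $d-1$) gives only $|T|\le d_f\lceil n/(d-1+d_f)\rceil$; for $d_f=2$ this is at most $2\lfloor n/d\rfloor+2$, and under the stated hypothesis of (ii) that yields $|T|<n/2+1$, i.e.\ possibly $|T|=n/2$ and a subcube of dimension exactly $n/2$, which proves nothing. Part (i) is the only case that is essentially complete, since the exact blocking number $\lceil n/d\rceil$ is elementary and $\lceil n/d\rceil<n/2$ whenever $d\ge 3$ and $n\ge 2d$ (the latter being forced by the degree bound for bent functions). As it stands, then, the proposal is a sound strategy with part (i) done, but (ii) and (iii) rest on unproven---and in one place incorrect---counting claims.
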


In this paper we will introduce another method which may be more
suitable for investigating homogeneous RotS
bent functions. By using the rotation symmetric forms of RotS
functions, we obtain more nonexistence results which are unaccessible by Theorem \ref{dalai-stanica}. For example our results
imply that most homogeneous degree $d(\geq 3)$ RotS bent functions
with SANF forms containing $x_1\cdots x_d$ cannot exist. Also, we
give an equivalent characterization of homogeneous degree 2 RotS
bent functions by GCD of polynomials.

\section{Preliminaries}

In this section  we list some basic definitions and notations about
homogeneous rotation symmetric Boolean functions and bent
functions.

Let $\mathbb{F}_2^n$ be the vector space of dimension $n$ over the
two element field $\mathbb{F}_2$. A Boolean function
$f(x_0,\cdots,x_{n-1})$ in $n$ variables is a map from
$\mathbb{F}_2^n$ to $\mathbb{F}_2$. For $\mathbf{x}=
(x_1,\cdots,x_n), \mathbf{u}= (u_1,\cdots,u_n)\in \mathbb{F}_2^n$.
Denote $\mathbf{x}^{\mathbf{u}}=x_1^{u_1}\cdots x_n^{u_n}$. Then
every Boolean function $f$ is uniquely of the form
$f(\mathbf{x})=\sum\limits_{\mathbf{u}\in\mathbf{U}_f}\mathbf{x}^\mathbf{u}$,
where $\mathbf{U}_f\subseteq \mathbb{F}_2^n$. Let $|\mathbf{u}|$
be the Hamming weight of $\mathbf{u}\in\mathbb{F}_2^n$. The
algebraic degree of $f$ is defined to be $Max\{|\mathbf{u}|\ |\
\mathbf{u}\in\mathbf{U}_f\}$.

By $A\|B$ we mean the concatenation of two bit strings $A$ and $B$. We
use $\underbrace{1\cdots 1}\limits_{l}$(respectively
$\underbrace{0\cdots 0}\limits_{l}$) to represent 1(respectively
0) string of length $l$, and $\underbrace{1*\cdots *1}\limits_{l}$
to represent a bit string of length $l$, with the first and the
last bit to be 1.

We define an operation $\oplus$ over $\mathbb{F}_2$ to be $x\oplus
y\in \mathbb{F}_2$ such that $x\oplus y=0$ if and only if $x=0$
and $y=0$. $\oplus$ can be extended to $\mathbb{F}_2^n$ by this
way: for $\mathbf{x},\mathbf{y}\in \mathbb{F}_2^n$,
$\mathbf{x}\oplus \mathbf{y}=(\mathbf{x}_1\oplus
\mathbf{y}_1,\cdots,\mathbf{x}_n\oplus \mathbf{y}_n)$. Let $1\leq
l\leq n$, the operation $\rho^l(\cdot)$ acing on $\mathbb{F}_2^n$
is defined to be
$\rho^l(x_1,\cdots,x_n)=(x_{n-l+1},x_{n-l+2},\cdots,x_n,x_1,\cdots,x_{n-l})$,
where $n+l=l$ if $l>0$. The cycle length $l_{\mathbf{x}}$ of
$\mathbf{x}\in \mathbb{F}_2^n$ is the least number $l$ such that
$\rho^l(\mathbf{x})=\mathbf{x}$. Obviously $l_{\mathbf{x}}|n$ and
$l_{\mathbf{x}}=l_{\rho(\mathbf{x})}$.

\begin{definition}\label{rots}
A Boolean function $f(\mathbf{x})$, is called rotation symmetric
(Abbr. RotS) if
$$f(\mathbf{x})=f(\rho(\mathbf{x})),\ for\ all\ \mathbf{x}\in \mathbb{F}_2^n.$$
\end{definition}

It is clear that a RotS function $f$ is of the form
$$f(\mathbf{x})=\sum\limits_{1\leq i\leq m}
\sum\limits_{0\leq l\leq
l_{\mathbf{u}_i}-1}\mathbf{x}^{\rho^l(\mathbf{u}_i)},$$ where
$m\geq 1, \mathbf{u}_i\in \mathbb{F}_2^n(1\leq i\leq m)$. Since
the existence of $\mathbf{x}^{\mathbf{u}_i}$ implies the existence
of $\mathbf{x}^{\rho(\mathbf{u}_i)}$, we can represent a RotS
function $f$ by the so-called \textit{short algebraic normal form}
(Abbr. SANF)
$\mathbf{x}^{\mathbf{u}_1}+\cdots+\mathbf{x}^{\mathbf{u}_m}$.

\begin{definition}\label{fourier}
For a Boolean function $f(x)$, the Fourier transform of $f$ at
$\mathbf{c}\in  \mathbb{F}_2^n$ is defined as
$$\widehat{f}(\mathbf{c})=\sum\limits_{\mathbf{x}\in \mathbb{F}_2^n}(-1)^{f(\mathbf{x})+\mathbf{c}\cdot \mathbf{x}},$$
where $\cdot$ is  dot product of two vectors in
$\mathbb{F}_2^n$.
\end{definition}

\begin{definition}\label{bent}
A Boolean function $f(\mathbf{x})$ is called bent if
$$|\widehat{f}(\mathbf{c})|=2^{n/2}\ \ for\ \ all\ \ \mathbf{c}\in \mathbb{F}_2^n.$$

\end{definition}

It is well-known that if $f(x_1,\cdots,x_n)$ is bent, then $n$
must be even, and the algebraic degree of $f$ is upper-bounded by
$n/2$.

Let
$f(\mathbf{x})=\mathbf{x}^{\mathbf{u}_1}+\cdots+\mathbf{x}^{\mathbf{u}_m}$,
define
$$h_f(\mathbf{u})=\sum\limits_{\stackrel{0\leq t_1,\cdots,t_m\leq 1}
{t_1\mathbf{u}_1\oplus\cdots \oplus
t_m\mathbf{u}_m=\mathbf{u}}}(-2)^{t_1+\cdots+t_m}.$$

It is not difficult to deduce that
$$\widehat{f}(\mathbf{c})=(-1)^{|\mathbf{c}|}\cdot
\sum\limits_{\mathbf{u}\succ
\mathbf{c}}2^{n-|\mathbf{u}|}h_f(\mathbf{u}),$$ where $\succ$ is a
partial order on $\mathbb{F}_2^n$ such that $(u_1,\cdots,u_n)\succ
(v_1,\cdots,v_n)$ if $u_i=v_i$ or $(u_i,v_i)=(1,0)$. We also have
the inverse formula:
$$h_f(\mathbf{u})=(-1)^{|\mathbf{u}|}\cdot 2^{|\mathbf{u}|-n}\cdot \sum\limits_{\mathbf{c}\succ \mathbf{u}}
\widehat{f}(\mathbf{c}).$$

By the above formulas and Definition \ref{bent}, one can prove
that
\begin{lemma}\cite{hou,carlet} \label{bentpanding}
Let $n$ be even and $f(\mathbf{x})=\mathbf{x}^{\mathbf{u}_1}
+\cdots+\mathbf{x}^{\mathbf{u}_m}$. Then $f$ is bent if and only
if
\begin{equation*}
v_2\left(h_f(\mathbf{u})\right)\left\{
\begin{array}{ll}
=n/2 &if\ \mathbf{u}=\mathbf{1},\\
>|\mathbf{u}|-n/2 &if\ \mathbf{u}
\neq\mathbf{1}.
\end{array}
\right.
\end{equation*}
where $v_2(\cdot)$ is the $2-$adic order function, $\mathbf{1}$
represents the vector $(1,\cdots,1)\in \mathbb{F}_2^n$.
\end{lemma}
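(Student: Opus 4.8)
The plan is to extract both implications directly from the two transition formulas relating $\widehat{f}$ and $h_f$ stated just above, working entirely with $2$-adic valuations of integers. Two elementary facts will be used repeatedly: since $n$ is even, $n/2\in\mathbb{Z}$; and for a fixed $\mathbf{u}$ the set $\{\mathbf{c}:\mathbf{c}\succ\mathbf{u}\}$ has exactly $2^{\,n-|\mathbf{u}|}$ elements, because $\mathbf{c}\succ\mathbf{u}$ forces the support of $\mathbf{c}$ to contain that of $\mathbf{u}$ while the remaining $n-|\mathbf{u}|$ coordinates are free.

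For the forward implication I would assume $f$ is bent, so $\widehat{f}(\mathbf{c})=\varepsilon_{\mathbf{c}}\,2^{n/2}$ with $\varepsilon_{\mathbf{c}}\in\{\pm1\}$. Substituting into the inverse formula and factoring out $2^{n/2}$ yields
$$h_f(\mathbf{u})=(-1)^{|\mathbf{u}|}\,2^{\,|\mathbf{u}|-n/2}\,S_{\mathbf{u}},\qquad S_{\mathbf{u}}:=\sum_{\mathbf{c}\succ\mathbf{u}}\varepsilon_{\mathbf{c}}\in\mathbb{Z}.$$
If $\mathbf{u}=\mathbf{1}$ then $\mathbf{c}=\mathbf{1}$ is the only index in the sum, so $S_{\mathbf{1}}=\pm1$ is odd and $v_2(h_f(\mathbf{1}))=n/2$. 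If $\mathbf{u}\neq\mathbf{1}$ then $S_{\mathbf{u}}$ is a sum of an even number $2^{\,n-|\mathbf{u}|}$ of signs, hence even, so $v_2(S_{\mathbf{u}})\geq1$ and $v_2(h_f(\mathbf{u}))\geq(|\mathbf{u}|-n/2)+1>|\mathbf{u}|-n/2$. This is precisely the stated valuation pattern.

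For the converse I would run the same bookkeeping through the forward formula. The summand $\mathbf{u}=\mathbf{1}$ is always present (as $\mathbf{1}\succ\mathbf{c}$) and contributes valuation $v_2(h_f(\mathbf{1}))=n/2$, whereas every summand with $\mathbf{u}\neq\mathbf{1}$ contributes $(n-|\mathbf{u}|)+v_2(h_f(\mathbf{u}))>n/2$ by hypothesis. Since a single strictly minimal term determines the valuation of a sum, this gives $v_2(\widehat{f}(\mathbf{c}))=n/2$, i.e. $\widehat{f}(\mathbf{c})^2\geq2^n$, for every $\mathbf{c}$. The remaining step is to upgrade this lower bound on each Fourier coefficient to the exact value demanded by Definition~\ref{bent}; I would do so with Parseval's identity $\sum_{\mathbf{c}}\widehat{f}(\mathbf{c})^2=2^{2n}$: a sum of $2^n$ terms each at least $2^n$ that already totals $2^{2n}$ must have every term equal to $2^n$, whence $|\widehat{f}(\mathbf{c})|=2^{n/2}$ and $f$ is bent.

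The main obstacle, I expect, is exactly this last maneuver in the converse. The valuation analysis alone pins down only the power of $2$ dividing each $\widehat{f}(\mathbf{c})$, never its magnitude, so some global constraint is indispensable; Parseval supplies it cleanly, but one has to notice that it is needed. The forward direction is comparatively mechanical, the only real content being the parity dichotomy between $\mathbf{u}=\mathbf{1}$ (odd fibre size) and $\mathbf{u}\neq\mathbf{1}$ (even fibre size), which is where the asymmetry between the two cases of the statement originates.
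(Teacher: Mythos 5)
Your proof is correct and follows exactly the route the paper indicates: the paper gives no proof of Lemma~\ref{bentpanding}, merely remarking that it follows ``by the above formulas and Definition~\ref{bent}'' and citing \cite{hou,carlet}, and your argument is the standard way of filling that in. Both directions check out --- the parity dichotomy on the fibre size $2^{n-|\mathbf{u}|}$ for the forward implication, the unique-minimal-valuation term for the converse, and you correctly identified that Parseval's identity is needed to convert $v_2(\widehat{f}(\mathbf{c}))=n/2$ into $|\widehat{f}(\mathbf{c})|=2^{n/2}$, which is the one genuinely non-mechanical step the paper's one-line remark glosses over.
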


\section{The result}
Let $f$ be a RotS function of homogeneous degree $d$, SANF of $f$
is $\sum\limits_{1\leq i\leq m}\mathbf{x}^{\mathbf{u}_i}$, where
$\mathbf{u}_i=(u_{i1},u_{i2},\cdots,u_{in}),u_{i1}=1$, and
$|\mathbf{u}_i|=d$.

We assume
\begin{equation*}
\begin{array}{ll}
D_i&=Max\{j|u_{ij}=1, 1\leq j\leq n\}, 1\leq i\leq m,\\
D_1&=Min\{D_i|1\leq i\leq m\}.\\
\end{array}
\end{equation*}

\begin{theorem}\label{multicycle}
Let $f$ be a RotS bent function of homogeneous degree $d\geq 3$,
the SANF of $f$ is $\sum\limits_{1\leq i\leq
m}\mathbf{x}^{\mathbf{u}_i}$, and
$\mathbf{u}_1=A_{l}\|B_{D_1-l}\|\underbrace{0\cdots0}\limits_{n-D_1},
1\leq l\leq D_1$, where $A_{l}=\underbrace{1*\cdots*1}\limits_{l},
B_{D_1-l}=\underbrace{1*\cdots*1}\limits_{D_1-l}$. If for all
$1\leq i\leq m$, $\mathbf{u}_i\neq
A_{l}\|\underbrace{0\cdots0}\limits_{D_i-D_1}\|B_{D_1-l}\|\underbrace{0\cdots0}\limits_{n-D_i}$
and $\mathbf{u}_i\neq
B_{D_1-l}\|\underbrace{0\cdots0}\limits_{n-kD_1}\|A_{l}\|\underbrace{0\cdots0}\limits_{kD_1-D_1}$,
where $kd<n$, then
$$k\cdot (d-1)< \frac{n}{2}.$$
\end{theorem}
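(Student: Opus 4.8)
The plan is to extract the inequality from a single, carefully chosen evaluation of the bentness criterion in Lemma~\ref{bentpanding}. Concretely, I would produce one vector $\mathbf{u}\neq\mathbf{1}$ for which both $|\mathbf{u}|$ and $h_f(\mathbf{u})$ can be computed exactly, and then read off $k(d-1)<n/2$ from the strict estimate $v_2(h_f(\mathbf{u}))>|\mathbf{u}|-n/2$. Since $\mathbf{u}_1$ has support in the first $D_1$ coordinates, the natural candidate is the ``tiling'' vector obtained by laying down $k$ rotated, pairwise disjoint copies of $\mathbf{u}_1$: set $\mathbf{v}_j=\rho^{(j-1)D_1}(\mathbf{u}_1)$ for $1\le j\le k$ and $\mathbf{u}=\mathbf{v}_1\oplus\cdots\oplus\mathbf{v}_k$, so that $\mathbf{u}$ consists of $k$ consecutive blocks $A_l\|B_{D_1-l}$ followed by zeros. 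The inequality $kD_1\le n$ (implicit in the well-definedness of the second excluded pattern, which contains $0^{\,n-kD_1}$) keeps the blocks from wrapping around, while the hypothesis $kd<n$ guarantees $|\mathbf{u}|<n$, hence $\mathbf{u}\neq\mathbf{1}$; thus the relevant branch of Lemma~\ref{bentpanding} is the strict one.

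Granting two facts about this $\mathbf{u}$, the theorem follows immediately. First, disjointness of the blocks gives $|\mathbf{u}|=kd$. Second, I claim $h_f(\mathbf{u})=(-2)^k$, so that $v_2(h_f(\mathbf{u}))=k$. Feeding these into $v_2(h_f(\mathbf{u}))>|\mathbf{u}|-n/2$ yields $k>kd-n/2$, that is $n/2>k(d-1)$, which is exactly the assertion. Thus the entire content of the theorem is concentrated in the evaluation of $h_f(\mathbf{u})$.

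To compute $h_f(\mathbf{u})$ I would prove the sharper structural claim that the only exponent vectors of $\mathbf{U}_f$ lying below $\mathbf{u}$ are $\mathbf{v}_1,\dots,\mathbf{v}_k$ themselves. Because these are disjoint and their $\oplus$ is $\mathbf{u}$, the unique subset of $\mathbf{U}_f$ whose $\oplus$ equals $\mathbf{u}$ is then $\{\mathbf{v}_1,\dots,\mathbf{v}_k\}$, giving $h_f(\mathbf{u})=(-2)^k$ directly from the definition. Now any $\mathbf{w}\in\mathbf{U}_f$ with $\mathrm{supp}(\mathbf{w})\subseteq\mathrm{supp}(\mathbf{u})$ is a rotation of some $\mathbf{u}_i$ and has weight $d$. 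If its support meets only one block, then since that block carries exactly $d$ ones of $\mathbf{u}$, a weight count forces $\mathbf{w}=\mathbf{v}_j$. The remaining possibility is that $\mathrm{supp}(\mathbf{w})$ straddles two blocks, splitting into an $A_l$-part taken from one block and a $B_{D_1-l}$-part taken from another; but then, after rotating $\mathbf{w}$ to normalized form, $\mathbf{w}$ is exactly one of the two configurations forbidden in the hypothesis — the spread-out $A_l\|0\cdots0\|B_{D_1-l}$ pattern, or its $B$-before-$A$ companion that joins the two extreme blocks cyclically.

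The hard part will be this last step: excluding every straddling vector. The subtlety is that $A_l$ and $B_{D_1-l}$ may themselves contain interior zeros (the $*$'s), so I must argue carefully that a weight-$d$ rotation of $\mathbf{u}_i$ whose support sits inside the periodic pattern $A_l\|B_{D_1-l}\|A_l\|B_{D_1-l}\|\cdots$ and crosses a block boundary can only align as a complete $A_l$-block matched with a complete $B_{D_1-l}$-block; partial overlaps and crossings of more than one boundary must be ruled out using the first-and-last-bit-$1$ property of $A_l$ and $B_{D_1-l}$ together with $|\mathbf{w}|=d$. Once this alignment lemma is in place, the two hypotheses remove precisely the two orientations of such a crossing, the structural claim holds, and the bound $k(d-1)<n/2$ follows.
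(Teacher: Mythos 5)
Your proposal follows essentially the same route as the paper: you choose the identical tiling vector $\mathbf{u}_0=\mathbf{u}_1\oplus\rho^{D_1}(\mathbf{u}_1)\oplus\cdots\oplus\rho^{(k-1)D_1}(\mathbf{u}_1)$, establish $v_2(h_f(\mathbf{u}_0))=k$ by showing the tiling is the only way to represent $\mathbf{u}_0$ from rotations of the $\mathbf{u}_i$ (the paper phrases this as uniqueness of the minimum-weight solution, you as uniqueness of the representing subset, which here coincide), and then feed $|\mathbf{u}_0|=kd$ into Lemma \ref{bentpanding} to get $k>kd-n/2$. The ``alignment'' step you flag as the hard part is exactly the step the paper asserts without detail from the two exclusion hypotheses, so your account is, if anything, more explicit about where the real work lies.
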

\begin{proof}
Let
$$\mathbf{u}_0=\mathbf{u}_1\oplus\rho^{D_1}(\mathbf{u}_1)\oplus\cdots\oplus\rho^{(k-1)D_1}(\mathbf{u}_1),$$
where $kd<n$.

Because $|\mathbf{u}_i|=d$ for all $1\leq i\leq m$, we deduce that

$$Min\left\{\sum\limits_{1\leq i\leq m,1\leq j\leq n}e_{ij}\ |\ \bigoplus\limits_{1\leq i\leq m }\bigoplus\limits_{1\leq j\leq
n-1}e_{ij}\rho^{j}(\mathbf{u}_i) =\mathbf{u}_0, e_{ij}=0,1
\right\}=k.$$

Since for all $1\leq i\leq m$, $\mathbf{u}_i\neq
A_{l}\|\underbrace{0\cdots0}\limits_{D_i-D_1}\|B_{D_1-l}\|\underbrace{0\cdots0}\limits_{n-D_i}$
and $\mathbf{u}_i\neq
B_{D_1-l}\|\underbrace{0\cdots0}\limits_{n-kD_1}\|A_{l}\|\underbrace{0\cdots0}\limits_{kD_1-D_1}$,
the only solution such that $Min\left\{\sum\limits_{1\leq i\leq
m,1\leq j\leq n}e_{ij} \right\}=k$ for the  equation
$$ \bigoplus\limits_{1\leq i\leq m }\bigoplus\limits_{1\leq j\leq
n-1}e_{ij}\rho^{j}(\mathbf{u}_i) =\mathbf{u}_0, e_{ij}=0,1, $$ is
$$\mathbf{u}_0=\mathbf{u}_1\oplus\rho^{D_1}(\mathbf{u}_1)\oplus\cdots\oplus\rho^{(k-1)D_1}(\mathbf{u}_1).$$

Hence we get $v_2(h_f(\mathbf{u}_0))=k$. Since $kd<n$,
$\mathbf{u}_0\neq \mathbf{1}$. By Theorem \ref{bentpanding}, we
have
$$v_2(h_f(\mathbf{u}_0))=k>|\mathbf{u}|-\frac{n}{2}=kd-\frac{n}{2},$$
and thus $k\cdot (d-1)<\frac{n}{2}$.
\end{proof}

The above theorem implies nonexistence of many RotS bent
functions. For example, we get

\begin{proposition} For a RotS function $f$ of homogeneous degree
$d\geq 3$, the following nonexistence results hold,

(1) If the SANF $\sum\limits_{1\leq i\leq
m}\mathbf{x}^{\mathbf{u}_i}$ of $f$   contains
$\mathbf{x}^{\mathbf{u}_1}=x_1\cdots x_d$, and $\mathbf{u}_i(2\leq
i\leq m)$ is not of the form
$\underbrace{1\cdots1}\limits_{l}\underbrace{0\cdots0}\limits_{D_i-d}\underbrace{1\cdots1}
\limits_{d-l}\underbrace{0\cdots0}\limits_{n-D_i}$,  then $f$ is
not bent.

(2) If the SANF of $f$ is $x_1\cdots x_d+x_1\cdots
x_{d-1}x_{d+1}$, then $f$ is not bent.

(3) Suppose the SANF of $f$ be $x_1x_{2+n_1}x_{3+n_1+n_2}$ and $n=
q(D+n_0)+r+(n_1+1)$, where $n_1,n_2\geq 0,n_0=Max\{n_1,n_2\},
D=n_1+n_2+3, q\geq 1, 0\leq r<D+n_0$. If $q(D-n_0-1)\geq r+n_1+1$,
then $f$ is not bent.
\end{proposition}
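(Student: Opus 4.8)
The plan is to derive each of the three nonexistence results as a corollary of Theorem \ref{multicycle}, by verifying in each case that the SANF satisfies the hypotheses and then showing the resulting inequality $k\cdot(d-1)<\frac{n}{2}$ fails for a suitable choice of $k$.

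For part (1), I would set $\mathbf{u}_1=x_1\cdots x_d$, so that $l=d$, $D_1=d$, and $B_{D_1-l}$ is the empty string while $A_l=\underbrace{1\cdots1}_{d}$. The two forbidden shapes in Theorem \ref{multicycle} collapse: the second forbidden form degenerates to $A_l$ prefixed by nothing, and the first becomes exactly the form excluded by the hypothesis of part (1). So the hypotheses of the theorem hold for every $k$ with $kd<n$. The plan is then to push $k$ as large as possible: taking the largest $k$ with $kd<n$, i.e. $k=\lceil n/d\rceil-1$ or thereabouts, I expect $k\cdot(d-1)\geq n/2$ to fail the theorem's conclusion, forcing $f$ to be non-bent. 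The main bookkeeping obstacle here is confirming that for this choice of $k$ the quantity $k(d-1)$ genuinely exceeds $n/2$ for all admissible $d\geq 3$ and $n$; this reduces to an elementary estimate on $\lceil n/d\rceil$ versus $n/(2(d-1))$, which I would handle by cases on $n\bmod d$.

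For part (2), the SANF is $x_1\cdots x_d+x_1\cdots x_{d-1}x_{d+1}$, so $\mathbf{u}_1=x_1\cdots x_d$ again and $\mathbf{u}_2=x_1\cdots x_{d-1}x_{d+1}$ has $D_2=d+1$. I would check that $\mathbf{u}_2$ is not of the first forbidden form $A_l\|\mathbf{0}\|B_{D_1-l}\|\mathbf{0}$: since $l=d$ here, that form would be $\underbrace{1\cdots1}_{d}$ followed by zeros, whereas $\mathbf{u}_2$ has its last $1$ in position $d+1$ with a $0$ in position $d$, so the shapes differ. With the hypotheses verified, the argument is identical to part (1), and again I would choose $k$ maximal subject to $kd<n$ to violate $k(d-1)<n/2$. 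I expect this case to be essentially immediate once part (1) is in hand.

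Part (3) is the genuinely substantive case and I expect it to be the main obstacle. Here $d=3$ with $\mathbf{u}_1=x_1x_{2+n_1}x_{3+n_1+n_2}$, so the gap structure $D=n_1+n_2+3$ encodes the spacing between the three ones, and the decomposition $n=q(D+n_0)+r+(n_1+1)$ is tailored so that the $q$ shifted copies $\mathbf{u}_1,\rho^{D_1}(\mathbf{u}_1),\dots$ used in $\mathbf{u}_0$ fit inside $n$ coordinates without wrapping or overlapping in a way that would create one of the two forbidden patterns. The plan is to take $k=q$ and verify that the hypothesis $q(D-n_0-1)\geq r+n_1+1$ guarantees exactly that the shifted copies avoid both forbidden forms $A_l\|\mathbf{0}\|B_{D_1-l}\|\mathbf{0}$ and $B_{D_1-l}\|\mathbf{0}\|A_l\|\mathbf{0}$. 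The hardest step will be translating the arithmetic condition $q(D-n_0-1)\geq r+n_1+1$ into the combinatorial statement that the minimal-weight representation of $\mathbf{u}_0$ is unique and equal to the $q$-fold shifted sum, which is precisely what Theorem \ref{multicycle} requires; once that is established, the conclusion $q(d-1)=2q<n/2$ must fail because the spacing inequality forces $n$ to be small relative to $q$, and the contradiction yields non-bentness. I would organize this step by carefully tracking the positions of the ones in each $\rho^{jD_1}(\mathbf{u}_1)$ and showing the forbidden "wrap-around" alignment is obstructed exactly by the given inequality.
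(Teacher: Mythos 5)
Your overall strategy for parts (1) and (2) --- invoke Theorem \ref{multicycle} with $k$ as large as possible and derive a numerical contradiction --- is the paper's strategy too, but the step you flag as ``bookkeeping'' is a genuine gap: the maximal $k$ with $kd<n$ does \emph{not} always violate $k(d-1)<n/2$. The critical failure is $n=2d$ (perfectly admissible for a bent function of degree $d=n/2$): there the maximal $k$ is $1$ and the theorem only gives $d-1<d$, which is vacuous. The paper has to leave Theorem \ref{multicycle} entirely in that case and build a different $\mathbf{u}_0$ from \emph{overlapping} shifts, e.g.\ $\mathbf{u}_0=\mathbf{u}_1\oplus\rho^{d-1}(\mathbf{u}_1)$ of weight $n-1$ in part (1) (and $\rho^{d-2}$ in part (2)), computing $v_2(h_f(\mathbf{u}_0))=2$ and applying Lemma \ref{bentpanding} directly; and even then part (2) leaves a residual case $d=3$, $n=6$ where the function $x_1x_2x_3+x_1x_2x_4$ must be checked non-bent by hand. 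Part (2) also requires its own split into $n\not\equiv 0,1$, $n\equiv 0$, and $n\equiv 1 \pmod d$ with different choices of $k$, so it is not ``immediate once part (1) is in hand.''

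Part (3) is where your plan would actually fail, not merely get harder. Applying Theorem \ref{multicycle} with $d=3$ and $k=q$ can only ever produce the inequality $2q<n/2$, and for generic gap parameters (say $n_1=0$, $n_2=5$, so $D+n_0=13$ and $n\approx 13q$) that inequality holds automatically --- no contradiction is available from that route, whatever the hypothesis $q(D-n_0-1)\geq r+n_1+1$ says. The paper's proof of (3) does not use Theorem \ref{multicycle} at all: it first forms $\mathbf{u}_2$ as the $\oplus$ of the consecutive unit shifts $\rho^i(\mathbf{u}_1)$, which fills the gaps and produces a solid block $\underbrace{1\cdots1}\limits_{D+n_0}$, then concatenates $q$ such blocks into $\mathbf{u}_0$ of weight $q(D+n_0)$ with $v_2(h_f(\mathbf{u}_0))=q(n_0+1)$, and feeds \emph{that} into Lemma \ref{bentpanding}. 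The whole point is that the $2$-adic valuation grows like $q(n_0+1)$ rather than like $q$, which is what makes the weight $q(D+n_0)$ beatable and yields the stated condition. Your proposal is missing this construction, and without it the claimed inequality cannot be reached.
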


\begin{proof}

(1)  If $d\nmid n$. Let $n=qd+r, 0< r< d, q\geq 2$.

Since $\mathbf{x}^{\mathbf{u}_1}=x_1\cdots x_d$, and
$\mathbf{u}_i(2\leq i\leq m)$ is not of the form
$\underbrace{1\cdots1}\limits_{l}\underbrace{0\cdots0}\limits_{D_i-d}\underbrace{1\cdots1}
\limits_{d-l}\underbrace{0\cdots0}\limits_{n-D_i}$ , using Theorem
\ref{multicycle}, we have
$$k\cdot (d-1)< \frac{n}{2},\ \ 1\leq k\leq \lfloor \frac{n}{d}\rfloor\ .$$

Let $k=\lfloor \frac{n}{d}\rfloor=q$. Thus $qd<2q+r<2q+d$, which
produces $d<2+\frac{2}{q-1}$.

\begin{enumerate}
    \item[(1.1)] If $q=2$, then $d<2+\frac{2}{q-1}=4$, so the only choice for $d$
is $d=3$. By again $qd<2q+r$, we have $r>3$, conflicting with
$r<d=3$.
    \item[(1.2)] If $q=3$, then $d<2+\frac{2}{q-1}=3$,  conflicting with $d\geq 3$.
    \item[(1.3)] If $q\geq 4$, then $d<2+\frac{2}{q-1}< 3$, conflicting with
$d\geq 3$.
\end{enumerate}

If $d|n$, let $n=qd,q\geq 2$. We choose $k=\lfloor
\frac{n}{d}\rfloor-1=q-1$. Similarly using Theorem
\ref{multicycle}, we have
$$(q-1)\cdot (d-1)< \frac{n}{2}=\frac{qd}{2},$$
which produces $d<2+\frac{2}{q-2}$.

\begin{enumerate}
    \item[(1.4)] If $q=2$, then $n=2d$. We choose another
$$\mathbf{u}_0=\mathbf{u}_1\oplus
\rho^{d-1}(\mathbf{u}_1)=\underbrace{1\cdots1}\limits_{n-1}\|0.$$
Similarly we get $v_2(h_f(\mathbf{u}_0))=2$. By Theorem
\ref{bentpanding},
$v_2(h_f(\mathbf{u}_0))=2>|\mathbf{u}_0|-n/2=n-1-n/2$. Therefore
$n<6$. So $d=n/2<3$, conflicting with $d\geq 3$.
    \item[(1.5)] If $q=3$, then $d<2+\frac{2}{q-2}=4$. Thus $d=3$ and $n=qd=9$.
Obviously this is impossible since a bent function of $n$
variables can exist for even $n$.
    \item[(1.6)] If $q\geq 4$, then $d<2+\frac{2}{q-2}< 3$, conflicting with
$d\geq 3$.
\end{enumerate}

(2) Denote $\mathbf{u}_1=\underbrace{1\cdots1}\limits_{d}\|
\underbrace{0\cdots0}\limits_{n-d}$,
$\mathbf{u}_2=\underbrace{1\cdots1}\limits_{d-1}\|0\|1\|
\underbrace{0\cdots0}\limits_{n-d-1}$, then the SANF of $f$ is
$\mathbf{x}^{\mathbf{u}_1}+\mathbf{x}^{\mathbf{u}_2}$.

If $n\neq 0,1(mod\ d)$, let $n=qd+r, 1< r< d, q\geq 2$. We choose
$$\mathbf{u}_0=\mathbf{u}_1\oplus\rho^{D_1}(\mathbf{u}_1)\oplus\cdots\oplus\rho^{(q-1)D_1}(\mathbf{u}_1)
=\underbrace{1\cdots1}\limits_{qd}\underbrace{0\cdots0}\limits_{r},$$
and it is easy to see that $v_2(h_f(\mathbf{u}_0))=q$.

Using Theorem \ref{multicycle}, we obtain $q\cdot (d-1)<
\frac{n}{2}$, and thus $qd<2q+r<2q+d$. The remaining discussions
are the same as (1.1),(1.2) and (1.3).

If $n\equiv 0(mod\ d)$, let $n=qd, q\geq 2$. We choose
$$\mathbf{u}_0=\mathbf{u}_1\oplus\rho^{D_1}(\mathbf{u}_1)\oplus\cdots\oplus\rho^{(q-2)D_1}(\mathbf{u}_1)
=\underbrace{1\cdots1}\limits_{(q-1)d}\underbrace{0\cdots0}\limits_{d}.$$
Similarly by Theorem \ref{multicycle} we get $d<2+\frac{2}{q-2}$.
We discuss the inequality in three cases: $q=2$, $q=3$ and $q\geq
4$. The proof for the cases $q=3,4$ are the same as (1.5), (1.6)
respectively.

If $q=2$, then $n=2d$. We choose
$$\mathbf{u}_0=\mathbf{u}_1\oplus\rho^{d-2}(\mathbf{u}_1)
=\underbrace{1\cdots1}\limits_{2d-2}\|00.$$ It is easy to see that
$v_2(h_f(\mathbf{u}_0))=2$. By Theorem \ref{bentpanding}, we get
$v_2(h_f(\mathbf{u}_0))=2>2d-2-n/2$. Thus $d<4$. Since $d\geq 3$,
we have $d=3,n=6$. However, RotS function over $6$ variables with
the SANF form $x_1x_2x_3+x_1x_2x_4$ can be verified to be
non-bent.

The remaining case is $n\equiv 1(mod\ d)$. Assume $n=qd+1$. We
choose
$$\mathbf{u}_0=\mathbf{u}_1\oplus\rho^{D_1}(\mathbf{u}_1)\oplus\cdots\oplus\rho^{(q-2)D_1}(\mathbf{u}_1)
=\underbrace{1\cdots1}\limits_{(q-1)d}\underbrace{0\cdots0}\limits_{d+1}.$$
Similarly  we get $d<2+\frac{1}{q-2}$. If $q>2$, then $d<3$, a
contradiction to $d\geq 3$. If $q=2$, then $n=2d+1$. However, a
Boolean functions in odd number variables cannot be bent.

(3) Denote $\mathbf{u}_1=1\|\underbrace{0\cdots0}\limits_{n_1}\|
1\|\underbrace{0\cdots0}\limits_{n_2}\|1\|\underbrace{0\cdots0}\limits_{n-D}$,
then the SANF of $f$ is $\mathbf{x}^{\mathbf{u}_1}$.

Since $n= q(D+n_0)+r+(n_1+1)$ and $q\geq 1$, we see that
$n-(n_1+1)\geq D+n_0$. Let $\mathbf{u}_2=\bigoplus\limits_{0\leq
i\leq n_0-1}\rho^i(\mathbf{u}_1)$, i.e.
\begin{equation*}
\begin{array}{lll}
\mathbf{u}_2

&=&1\underbrace{0\cdots0}\limits_{n_1}
1\underbrace{0\cdots0}\limits_{n_2}1\underbrace{0\cdots0}\limits_{n-D}\oplus\\
&\ &01\underbrace{0\cdots0}\limits_{n_1}
1\underbrace{0\cdots0}\limits_{n_2}1\underbrace{0\cdots0}\limits_{n-D-1}\oplus\\

&\ &\vdots\\

&\
&\underbrace{0\cdots0}\limits_{n_0}1\underbrace{0\cdots0}\limits_{n_1}
1\underbrace{0\cdots0}\limits_{n_2}1\underbrace{0\cdots0}\limits_{n-D-n_0}\\

&=&\underbrace{1\cdots1}\limits_{D+n_0}\|
\underbrace{0\cdots0}\limits_{n-D-n_0}.
\end{array}
\end{equation*}

Let
\begin{equation*}
\begin{array}{ll}
\mathbf{u}_0&=\bigoplus\limits_{0\leq i\leq
k-1}\rho^{i(D+n_0)}(\mathbf{u}_2)\\
&=\underbrace{1\cdots1}\limits_{D+n_0}\|\cdots\|\underbrace{1\cdots1}\limits_{D+n_0}
\|\underbrace{0\cdots0}\limits_{n-k(D+n_0)}\\
&=\underbrace{1\cdots1}\limits_{k(D+n_0)}\|\underbrace{0\cdots0}\limits_{n-k(D+n_0)},
\end{array}
\end{equation*}
where $1\leq k\leq \lfloor\frac{n-n_1-1}{D+n_0}\rfloor$. It is not
difficult to see that $v_2(h_f(\mathbf{u}_0))=k(n_0+1)$.

Let $k=\lfloor\frac{n-n_1-1}{D+n_0}\rfloor=q$. By Theorem
\ref{bentpanding}, we have
$v_2(h_f(\mathbf{u}_0))=q(n_0+1)>|\mathbf{u}_0|-n/2$, which
implies $q(D-n_0-1)< r+n_1+1$. It follows that $f$ is not bent if
$q(D-n_0-1)\geq r+n_1+1$.
\end{proof}

\begin{remark}
Note that the above nonexistence results could not be
obtained by Theorem \ref{dalai-stanica}. For example, the
nonexistence of homogeneous RotS bent functions with SANF
$x_1\cdots x_d+x_1\cdots x_{d-1}x_{d+1}$ could not be proven by
Theorem \ref{dalai-stanica}.
\end{remark}

\begin{remark}
We remark that the statement ``prove the nonexistence of homogeneous
RotS bent functions of degree $\geq 3$ on a single cycle(i.e. the
SANF is $\mathbf{x}^\mathbf{u}$ for some $\mathbf{u}$)" in
\cite{stanica2004} is incorrect. The proof is based on
the assumption that all RotS functions of a single cycle are
affinely equivalent to RotS functions with SANF $x_1 x_2\cdots
x_d$. In fact there are many RotS functions of a single cycle that
are not affinely equivalent to $x_1 x_2\cdots x_d$.
\end{remark}


In the following we will give a characterization of homogeneous
RotS bent function of degree $2$. First recall two basic results
about bent functions and circulant matrixes. A circulant matrix
over $\mathbb{F}_2$ is of the form
\begin{equation*}
\left(
\begin{array}{ll}
&\mathbf{a}_1\\ &\rho(\mathbf{a}_1)\\ &\vdots \\
&\rho^{n-1}(\mathbf{a}_1) \end{array} \right),
\end{equation*}
where $\mathbf{a}_1=(a_{11},\cdots,a_{1n})\in \mathbb{F}_2^n$. So
a circulant matrix can be represented by its first row
$\mathbf{a}_1$. Further a circulant matrix over
$\mathbb{F}_2$ can be represented by the polynomial $\sum\limits_{1\leq j\leq
n}a_{1j}x^{j-1}\in \mathbb{F}_2[x]$.

\begin{lemma}\label{quadratic}
Quadratic Boolean function $f(x_1,\cdots,x_n)=\sum\limits_{1\leq
i\leq j\leq n}a_{ij}x_ix_j+\sum\limits_{1\leq i\leq n}b_ix_i$ is
bent if and only if the matrix $(a_{ij})_{n\times n}$ is
nonsingular, where $a_{ij}=a_{ji}\in \mathbb{F}_2, a_{ii}=0, 1\leq
i,j\leq n$.
\end{lemma}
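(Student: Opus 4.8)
The statement is the classical characterization of bentness for quadratic Boolean functions. The plan is to reduce the bent condition to a statement about the associated symplectic (alternating) bilinear form $B(x,y) = \sum_{i<j} a_{ij}(x_iy_j + x_jy_i)$ over $\mathbb{F}_2$, whose Gram matrix is exactly $(a_{ij})_{n\times n}$ with $a_{ii}=0$. First I would observe that adding the linear part $\sum b_i x_i$ and any constant merely translates and shifts the Fourier spectrum by a sign, since replacing $f$ by $f + \ell$ for an affine $\ell$ permutes the values $\widehat{f}(\mathbf{c})$ and negates some of them; hence $|\widehat{f}(\mathbf{c})|$ is unchanged as a multiset, and we may assume $b_i = 0$ and that $f$ is the pure quadratic form $Q(\mathbf{x}) = \sum_{i<j} a_{ij}x_ix_j$.

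The core computation is the standard ``doubling'' trick for the Fourier transform of a quadratic form. I would write
$$
\widehat{Q}(\mathbf{c})^2 = \sum_{\mathbf{x},\mathbf{y}} (-1)^{Q(\mathbf{x})+Q(\mathbf{y})+\mathbf{c}\cdot(\mathbf{x}+\mathbf{y})},
$$
substitute $\mathbf{y} = \mathbf{x}+\mathbf{z}$, and use that over $\mathbb{F}_2$ one has $Q(\mathbf{x})+Q(\mathbf{x}+\mathbf{z}) = Q(\mathbf{z}) + B(\mathbf{x},\mathbf{z})$, where $B(\mathbf{x},\mathbf{z}) = \mathbf{x}^{\mathsf T} A \mathbf{z}$ is the bilinear form with matrix $A = (a_{ij})$. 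After interchanging the order of summation, the inner sum over $\mathbf{x}$ of $(-1)^{B(\mathbf{x},\mathbf{z})} = (-1)^{\mathbf{x}\cdot(A\mathbf{z})}$ equals $2^n$ if $A\mathbf{z}=\mathbf{0}$ and $0$ otherwise. This collapses the double sum to a sum over the kernel of $A$, giving
$$
\widehat{Q}(\mathbf{c})^2 = 2^n \sum_{\mathbf{z}\,:\,A\mathbf{z}=\mathbf{0}} (-1)^{Q(\mathbf{z}) + \mathbf{c}\cdot \mathbf{z}}.
$$

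From here the characterization falls out cleanly. If $A$ is nonsingular, the kernel is trivial, only $\mathbf{z}=\mathbf{0}$ survives, and $\widehat{Q}(\mathbf{c})^2 = 2^n$ for every $\mathbf{c}$, so $|\widehat{Q}(\mathbf{c})| = 2^{n/2}$ and $Q$ is bent by Definition \ref{bent}. Conversely, if $A$ is singular, the kernel $K = \ker A$ has dimension $r \geq 1$; I would then argue that the residual sum over $K$ either vanishes for some $\mathbf{c}$ (forcing $\widehat{Q}(\mathbf{c})=0\neq \pm 2^{n/2}$) or, when it does not vanish, yields $|\widehat{Q}(\mathbf{c})|^2 = 2^{n+r} > 2^n$, either way violating the bent condition. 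The one point deserving care --- and the main obstacle --- is controlling the character sum $\sum_{\mathbf{z}\in K}(-1)^{Q(\mathbf{z})+\mathbf{c}\cdot\mathbf{z}}$ on the kernel: the restriction of $Q$ to $K$ is itself a quadratic form and $\mathbf{c}\cdot\mathbf{z}$ restricts to a linear functional on $K$, so one must check that this sum cannot have constant absolute value $2^{n/2 - r/2}$ across all $\mathbf{c}$ when $r\geq 1$. The quickest route is to note that as $\mathbf{c}$ ranges over $\mathbb{F}_2^n$ the restricted linear form $\mathbf{c}|_K$ ranges over all of $K^{*}$, and a sum of $2^r$ signs indexed by a nontrivial affine character necessarily takes the value $0$ for at least one choice of the character unless $r=0$; this produces a $\mathbf{c}$ with $\widehat{Q}(\mathbf{c})=0$, contradicting bentness and completing the converse.
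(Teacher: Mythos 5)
The paper states Lemma \ref{quadratic} without proof, citing it implicitly as a classical fact (it is essentially Dickson's theorem on quadratic forms over $\mathbb{F}_2$), so there is no in-paper argument to compare against; your proposal supplies a correct, self-contained proof by the standard route. The reduction to the pure quadratic part is right ($\widehat{f+\mathbf{b}\cdot\mathbf{x}}(\mathbf{c})=\widehat{f}(\mathbf{c}+\mathbf{b})$, so the spectrum is merely translated), and the doubling identity $\widehat{Q}(\mathbf{c})^2=2^n\sum_{\mathbf{z}\in\ker A}(-1)^{Q(\mathbf{z})+\mathbf{c}\cdot\mathbf{z}}$ immediately gives bentness when $A$ is nonsingular. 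One small point you should tighten in the converse: for $\mathbf{z},\mathbf{z}'\in K=\ker A$ one has $Q(\mathbf{z}+\mathbf{z}')=Q(\mathbf{z})+Q(\mathbf{z}')+\mathbf{z}^{\mathsf T}A\mathbf{z}'=Q(\mathbf{z})+Q(\mathbf{z}')$, so the restriction of $Q$ to $K$ is not merely ``a quadratic form'' but is genuinely \emph{linear}; this is exactly what makes $Q(\mathbf{z})+\mathbf{c}\cdot\mathbf{z}$ a linear functional on $K$ and justifies your claim that the kernel sum is $0$ unless $\mathbf{c}|_K$ hits one particular element of $K^{*}$. (Even more cheaply: bentness forces $\sum_{\mathbf{z}\in K}(-1)^{Q(\mathbf{z})+\mathbf{c}\cdot\mathbf{z}}=1$ for every $\mathbf{c}$, while that sum of $2^{\dim K}$ signs is even whenever $\dim K\geq 1$, a parity contradiction that avoids the character analysis altogether.) With that remark your argument is complete and correct.
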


\begin{lemma}\label{circulant}
Circulant matrix $(a_{ij})_{n\times n}$ over $\mathbb{F}_2$ is
nonsingular if and only if the polynomials $\sum\limits_{1\leq
j\leq n}a_{1j}x^{j-1}$ and $x^n+1$ are relatively prime, i.e.
$GCD(\sum\limits_{1\leq j\leq n}a_{1j}x^{j-1},\ x^n+1)=1.$
\end{lemma}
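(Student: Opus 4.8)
The plan is to identify the ring of $n\times n$ circulant matrices over $\mathbb{F}_2$ with the quotient ring $R=\mathbb{F}_2[x]/(x^n+1)$ and then read off nonsingularity as invertibility in $R$. First I would introduce the fundamental circulant $P$, the matrix with first row $(0,1,0,\dots,0)$, which corresponds to the polynomial $x$. From the definition of $\rho$ one verifies that $P$ is the cyclic permutation matrix, so $P^n=I$ and hence $P^n+I=0$ over $\mathbb{F}_2$. Consider the $\mathbb{F}_2$-algebra map sending $a(x)=\sum_{j=1}^{n}a_{1j}x^{j-1}$ to $a(P)=\sum_{j=1}^{n}a_{1j}P^{j-1}$. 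Since the first row of $P^{j-1}$ is the $j$-th unit vector, the first row of $a(P)$ is $\mathbf{a}_1$, so $a(P)$ is exactly the circulant determined by $\mathbf{a}_1$. Because $P^n=I$, this map kills $x^n+1$ and therefore descends to $\Phi\colon R\to\mathrm{Circ}_n(\mathbb{F}_2)$; it is surjective by construction and injective because a representative of degree $<n$ is recovered from the first row, so $\Phi$ is a ring isomorphism.

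With $\Phi$ in hand, it suffices to show that $C=\Phi(\bar a)$ is nonsingular exactly when $\bar a$ is a unit of $R$, and that units are characterized by coprimality. If $\gcd(a,x^n+1)=1$, Bézout gives $u,v$ with $ua+v(x^n+1)=1$, whence $\bar u\bar a=1$ in $R$ and $C\,\Phi(\bar u)=I$, so $C$ is invertible. Conversely, if $g=\gcd(a,x^n+1)$ has positive degree, set $h=(x^n+1)/g$; then $ah=(a/g)(x^n+1)\equiv 0\pmod{x^n+1}$ while $\bar h\neq 0$ in $R$, so $C\,\Phi(\bar h)=0$ with $\Phi(\bar h)\neq 0$. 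A nonzero column of $\Phi(\bar h)$ then lies in the kernel of $C$, forcing $C$ to be singular. Combining the two directions gives that $C$ is nonsingular if and only if $\gcd(a,x^n+1)=1$, which is the assertion.

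I expect the only genuine care to lie in setting up the isomorphism $\Phi$ correctly: matching the shift $\rho$ with multiplication by $x$, checking that $P^n=I$ makes the map well defined modulo $x^n+1$, and confirming surjectivity and injectivity. Once $\Phi$ is established, the remaining steps—Bézout for the coprime case and the explicit zero divisor $h=(x^n+1)/g$ for the non-coprime case—are routine. One minor point to keep straight is the degenerate situation $\deg g=n$, where $h$ is a nonzero constant and $\bar a=0$; the zero divisor argument still applies since $\Phi(\bar h)=I\neq 0$ and $C=0$ is singular.
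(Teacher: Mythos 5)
Your proof is correct and complete. Note that the paper itself offers no proof of this lemma at all---it is merely ``recalled'' as a known basic fact about circulant matrices---so there is no argument in the paper to compare against; your route via the isomorphism $\mathbb{F}_2[x]/(x^n+1)\cong\mathrm{Circ}_n(\mathbb{F}_2)$ is the standard one. You also handle the one genuinely delicate point correctly: matrix nonsingularity is not a priori the same as invertibility inside the circulant subring, but your two directions (B\'ezout gives a two-sided inverse when the gcd is $1$; the explicit zero divisor $h=(x^n+1)/g$ produces a nonzero kernel vector otherwise) together close that gap by contraposition, including the degenerate case $a=0$.
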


It can be verified that $\sum\limits_{1\leq i\leq n}x_ix_{e-1+i}$
with $e>n/2$ is in fact equal to $\sum\limits_{1\leq i\leq
n}x_ix_{n-e+1+i}$ with $n-e+2\leq n/2+1$. So we can assume a
homogeneous RotS function of degree 2 has SANF form
$x_1x_{e_1}+\cdots+x_1x_{e_m}$, where $2\leq e_1<e_2\cdots
<e_m\leq n/2+1, m\leq n/2$.  Obviously, the associated matrix
$(a_{ij})_{n\times n}$ of $f$ is circulant,  with the first row
$(a_{11},\cdots,a_{1n})$ such that
$$a_{11}=0, a_{1e_i}=a_{1(n+2-e_i)}=1, 1\leq i\leq m, and\ a_{1j}=0\ if j\neq e_{i}\ or\ n+2-e_i.  $$
Thus the corresponding polynomial is $\sum\limits_{1\leq i\leq
m}(x^{e_i-1}+x^{n+1-e_i})$, where $x^{e_i-1}+x^{n+1-e_i}$ is
assumed to be $x^{n/2}$ if $e_i-1=n+1-e_i=n/2$. By Lemma
\ref{quadratic} and Lemma \ref{circulant}, we have
\begin{theorem}
Homogeneous RotS function $f$ of degree 2 described as above is
bent if and only if
$$GCD(\sum\limits_{1\leq i\leq m}(x^{e_i-1}+x^{n+1-e_i}),\ x^n+1)=1.$$
\end{theorem}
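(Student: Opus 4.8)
The plan is to combine the two recalled lemmas in a straightforward chain, after verifying that the quadratic RotS function really does correspond to a circulant matrix with the stated first row. First I would write out the associated symmetric matrix $(a_{ij})_{n\times n}$ of the quadratic form $f$. Since $f$ is rotation symmetric, applying $\rho$ to the variables permutes the monomials $x_ix_j$ cyclically, so the coefficient $a_{ij}$ depends only on $j-i \pmod n$; this is exactly the condition that $(a_{ij})$ is a circulant matrix, and its first row $(a_{11},\dots,a_{1n})$ encodes all the information. The term $x_1x_{e_i}$ forces $a_{1e_i}=1$, and because the matrix is symmetric (and circulant) we must also have $a_{1(n+2-e_i)}=1$, with $a_{11}=0$ and all other entries of the first row zero; this is precisely the description given just before the theorem.

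Next I would translate the bentness criterion. By Lemma \ref{quadratic}, $f$ is bent if and only if the circulant matrix $(a_{ij})$ is nonsingular over $\mathbb{F}_2$. By Lemma \ref{circulant}, this circulant is nonsingular if and only if its representing polynomial $\sum_{1\leq j\leq n} a_{1j}x^{j-1}$ is coprime to $x^n+1$. Substituting the explicit first row, the representing polynomial is $\sum_{1\leq i\leq m}\bigl(x^{e_i-1}+x^{(n+2-e_i)-1}\bigr)=\sum_{1\leq i\leq m}\bigl(x^{e_i-1}+x^{n+1-e_i}\bigr)$, which is exactly the polynomial in the statement. Chaining the two ``if and only if'' equivalences then yields the theorem.

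The only genuine subtlety, and the step I would treat most carefully, is the degenerate case where $e_i-1=n+1-e_i=n/2$, i.e. $e_i=n/2+1$. Here the two nominally distinct first-row positions $e_i$ and $n+2-e_i$ collapse to the single position $n/2+1$, so the monomial $x_1x_{n/2+1}$ contributes a single $1$ in the first row rather than two, and over $\mathbb{F}_2$ the two terms $x^{e_i-1}$ and $x^{n+1-e_i}$ coincide and must \emph{not} be added (which would cancel to $0$); this is why the statement stipulates that $x^{e_i-1}+x^{n+1-e_i}$ is read as the single term $x^{n/2}$ in that case. I would verify that with this convention the representing polynomial of Lemma \ref{circulant} agrees term-by-term with the first row, so that the equivalence goes through uniformly whether or not $e_m=n/2+1$. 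Everything else is bookkeeping: confirming the indexing convention $x^{j-1}$ from Lemma \ref{circulant} matches the position $a_{1j}$, and noting that the normalization $e_i\leq n/2+1$ (justified in the paragraph preceding the theorem by the identity $\sum x_ix_{e-1+i}=\sum x_ix_{n-e+1+i}$) ensures we count each cyclic class of monomials exactly once without double-listing.
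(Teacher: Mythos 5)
Your proposal is correct and follows exactly the paper's argument: the paper likewise sets up the circulant matrix with first row determined by $a_{1e_i}=a_{1(n+2-e_i)}=1$ (with the same convention that the two positions collapse to $x^{n/2}$ when $e_i=n/2+1$) and then chains Lemma \ref{quadratic} with Lemma \ref{circulant}. Your explicit attention to the degenerate case $e_i-1=n+1-e_i$ and to the normalization $e_i\leq n/2+1$ matches the remarks the paper makes in the paragraph preceding the theorem.
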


\begin{remark}
A necessary condition for $GCD(\sum\limits_{1\leq i\leq
m}(x^{e_i-1}+x^{n+1-e_i}),\ x^n+1)=1$ is $x^{n/2}$ should be
contained in $\sum\limits_{1\leq i\leq m}(x^{e_i-1}+x^{n+1-e_i})$,
i.e. the SANF of $f$ must contain $x_1x_{n/2+1}$. For example, all
homogeneous degree $2$ RotS bent functions in $8-$variables are
(expressed in SANF forms, see \cite{stanica}):
\begin{equation*}
\begin{array}{ll}
&x_1x_5;\
x_1x_2+x_1x_5;\ x_1x_3+x_1x_5;\ x_1x_4+x_1x_5;\ x_1x_2+x_1x_3+x_1x_5; \\
&x_1x_2+x_1x_4+x_1x_5;\ x_1x_3+x_1x_4+x_1x_5;\
x_1x_2+x_1x_3+x_1x_4+x_1x_5.
\end{array}
\end{equation*}
\end{remark}

\section{Conclusion} In this paper, we presented a different method
suitable for the existence problem of homogeneous rotation symmetric bent
functions, which leaded to some new results, and may be used to prove the nonexistence  of
most homogeneous rotation symmetric bent functions with degree
$>2$ once their SANFs are given. Since the conjecture is only partially proved, we expect a fully proof with the aid of our proposed method.

\end{document}